\documentclass[reqno,11pt]{amsart}
\usepackage[utf8]{inputenc}
\usepackage[english]{babel}
\usepackage[left=2.5cm,right=2.5cm,top=2cm,bottom=2cm]{geometry}
\usepackage{setspace}

\usepackage{amsmath,amssymb,amsthm,mathtools}
\IfFileExists{stmaryrd.sty}{\usepackage{stmaryrd}}{%
}
\IfFileExists{bbm.sty}{\usepackage{bbm}}{%
} 
\usepackage{graphicx}
\IfFileExists{pstricks.sty}{\usepackage{pstricks,pst-plot}}{}
\usepackage{tikz}
\usetikzlibrary{decorations.markings}
\usepackage{xcolor}
\usepackage{enumerate}
\usepackage{array}
\usepackage[colorlinks=true, linkcolor=blue, citecolor=green, urlcolor=purple]{hyperref}
\usepackage[nameinlink,capitalise]{cleveref}

\usepackage{cancel}
\usepackage{url}
\usepackage[all]{xy}  
\usepackage[pagewise]{lineno}

\def\today{\ifcase\month\or
  January\or February\or March\or April\or May\or June\or
  July\or August\or September\or October\or November\or December\fi
  \space\number\day, \number\year}

\newtheorem{theorem}{Theorem}[section]

\newtheorem{proposition}[theorem]{Proposition}
\newtheorem{corollary}[theorem]{Corollary}
\theoremstyle{definition}
\newtheorem{definition}[theorem]{Definition}
\newtheorem{example}[theorem]{Example}
\theoremstyle{remark}
\newtheorem{remark}[theorem]{Remark}

\newtheorem*{disclaimer}{\textbf{Disclaimer}}

\tikzset{every picture/.style={line width=0.75pt}}  

\begin{document}

\title[About two results for new valid juggling sequences]{About two results for new valid juggling sequences}
 \author[]{Hugo Parada}
\date{\today}

\thanks{Hugo Parada is funded by the Agence Nationale de la Recherche by the QuBiCCS project (ANR-24-CE40-3008)}
 \address{Universit\'e de Lorraine, CNRS, Inria, IECL, F-54000 Nancy, France.}
\email{hugo.parada@inria.fr}
\allowdisplaybreaks
\numberwithin{equation}{section}

\begin{abstract}
In this note, we study two simple operations for extending juggling (siteswap) sequences, called \emph{first throws} and \emph{last catches}. Both constructions come from a natural question for a juggler: how can one add throws at the beginning or catches at the end of a pattern without creating a collision? Using landing times and the permutation test, we give necessary and sufficient conditions under which these constructions, when applied to an arbitrary valid siteswap, produce another valid siteswap. We complement our analysis with several examples. An interactive visualization of these extensions is available in~\cite{parada2026siteswapvisualizer}.
\end{abstract}

\maketitle
%\setcounter{tocdepth}{2}
%\tableofcontents

\section{Introduction and a crash course on siteswap notation}

\noindent
The \emph{siteswap notation} is a mathematical system developed in the 1980s to encode juggling patterns using  finite sequences of nonnegative integers. At its core, it provides a model to describe the timing and trajectory of throws in a periodic juggling \textit{
trick}. Beyond its recreational origins, the siteswap formalism has inspired a rich interplay between combinatorics and modular arithmetic. Siteswap notation serves as a language to communicate juggling sequences between jugglers, but clearly has some limitations; we cannot model all the richness of  juggling using mathematics and, of course, it is not necessary; still, siteswap may help in the construction of  new \textit{tricks}.\bigskip

\noindent  The precise origin of siteswap notation is somewhat diffuse. As Polster points out~\cite[p.~5]{polster2003mathematics}, slightly different accounts of its invention coexist. Nevertheless, the idea of encoding juggling patterns by integer sequences appears to have been developed independently, around the same time, by at least three groups~\cite[p.~507]{buhler1994juggling}:
\begin{itemize}
\item Paul Klimek in Santa Cruz, California, whose work seems to have preceded the other two developments~\cite{tiemann1991notation}.
\item Bruce Tiemann and Bengt Magnusson at the California Institute of Technology~\cite{magnusson1989physics,tiemann1991notation}.
\item Adam Chalcraft, Mike Day, and Colin Wright in Cambridge~\cite{wrightday1995origins}.
\end{itemize}

Now, we introduce siteswap notation, here we follow  \cite{polster2003mathematics}.In siteswap notation, each integer represents the number of beats after which a thrown object will land. 
For instance, the sequence \texttt{441} means: The first throw lands in 4 beats; The second throw lands in 4 beats;  The third throw lands in 1 beat. For an illustration of the temporal structure of the pattern 441, see Figure \ref{fig:beats}.

\begin{figure}[h!]
\centering

\begin{tikzpicture}
[x=0.75pt,y=0.75pt,yscale=-1,xscale=1,scale=0.6]

% Beat line
\draw [line width=2.25]
  (126.31,165.96) -- (431.6,167.64)
  (171.33,162.21) -- (171.29,170.21)
  (216.33,162.46) -- (216.29,170.46)
  (261.33,162.71) -- (261.29,170.71)
  (306.33,162.95) -- (306.29,170.95)
  (351.33,163.20) -- (351.29,171.20)
  (396.33,163.45) -- (396.29,171.45);

\draw [
  color={rgb,255:red,208;green,2;blue,27},
  line width=2.25
]
  (126.31,165.96)
  .. controls (166.51,6.52) and (280.12,53.51)
  .. (305.04,164.29);

\draw [
  color={rgb,255:red,65;green,117;blue,5},
  line width=2.25
]
  (170.16,164.29)
  .. controls (210.36,4.84) and (324.97,53.51)
  .. (349.88,164.29);

\draw [
  color={rgb,255:red,24;green,108;blue,204},
  line width=2.25
]
  (216.01,165.96)
  .. controls (214.35,129.04) and (262.18,130.72)
  .. (261.19,164.29);

\draw [
  color={rgb,255:red,65;green,117;blue,5},
  line width=2.25
]
  (351.01,165.96)
  .. controls (349.35,129.04) and (397.18,130.72)
  .. (396.19,164.29);

\begin{scope}
\clip (110,0) rectangle (432,180);

\draw [
  color={rgb,255:red,24;green,108;blue,204},
  line width=2.25,
  dashed
]
  (261.31,165.96)
  .. controls (301.51,6.52) and (415.12,53.51)
  .. (440.04,164.29);

\draw [
  color={rgb,255:red,208;green,2;blue,27},
  line width=2.25,
  dashed
]
  (306.16,164.29)
  .. controls (346.36,4.84) and (459.97,53.51)
  .. (484.88,164.29);
\end{scope}

\draw [line width=2.25]
  (420,159) -- (433.33,167.32) -- (420,175.63);

% Beat labels
\draw (121.31,186.56) node
  [anchor=north west,inner sep=0.75pt] {$0$};

\draw (165.36,185.55) node
  [anchor=north west,inner sep=0.75pt] {$1$};

\draw (209.01,184.88) node
  [anchor=north west,inner sep=0.75pt] {$2$};

\draw (255.05,183.88) node
  [anchor=north west,inner sep=0.75pt] {$3$};

\draw (297.71,184.88) node
  [anchor=north west,inner sep=0.75pt] {$4$};

\draw (344.75,183.88) node
  [anchor=north west,inner sep=0.75pt] {$5$};

\draw (389.75,183.88) node
  [anchor=north west,inner sep=0.75pt] {$6$};

% First 441

% Beat 0: red object, throw 4
\draw (120.31,216.25) node [
  anchor=north west,
  inner sep=0.75pt,
  color={rgb,255:red,208;green,2;blue,27}
] {$4$};

% Beat 1: green object, throw 4
\draw (160.17,216.25) node [
  anchor=north west,
  inner sep=0.75pt,
  color={rgb,255:red,65;green,117;blue,5}
] {$4$};

% Beat 2: blue object, throw 1
\draw (207.01,216.25) node [
  anchor=north west,
  inner sep=0.75pt,
  color={rgb,255:red,24;green,108;blue,204}
] {$1$};

% Second 441

% Beat 3: blue object, throw 4
\draw (255.05,216.25) node [
  anchor=north west,
  inner sep=0.75pt,
  color={rgb,255:red,24;green,108;blue,204}
] {$4$};

% Beat 4: red object, throw 4
\draw (297.71,216.25) node [
  anchor=north west,
  inner sep=0.75pt,
  color={rgb,255:red,208;green,2;blue,27}
] {$4$};

% Beat 5: green object, throw 1
\draw (344.75,216.25) node [
  anchor=north west,
  inner sep=0.75pt,
  color={rgb,255:red,65;green,117;blue,5}
] {$1$};

% Axis label
\draw (438,164.7) node
  [anchor=west,font=\large] {Beats};

\end{tikzpicture}

\caption{Juggling diagram for the siteswap sequence \texttt{441}.}
\label{fig:beats}
\end{figure}

The sequence is then repeated periodically; the notation captures the \emph{temporal rhythm} of a juggling pattern and produces a precise alternation of catches and releases. 

\subsection{Visualizing and validating juggling patterns}

To interpret siteswap sequences, it is helpful to assign a visual meaning to the integers that describe each throw. From now on, we shall use the following convention:
\begin{itemize}
    \item \textbf{0}: No throw (an empty hand).
    \item \textbf{1}: A quick horizontal transfer between hands.
    \item \textbf{2}: A hold, where the object is not thrown.
    \item \textbf{3}: A normal cross throw (to the opposite hand).
    \item \textbf{4}: A high vertical throw (column throw).
    \item \textbf{5}: A high cross throw.
    \item \textbf{Higher numbers:} Throws of increasing height, taking more beats to return; even throws go vertical (to the same hand) and odd throws go cross (to the opposite hand).
    \item For throw heights greater than  \textbf{9}, letters are used: $\texttt{a}=10$, $\texttt{b}=11$, $\texttt{c}=12$, and so on.
\end{itemize}

\begin{figure}[ht]
\centering
\begin{tikzpicture}[
  scale=0.76,
  line cap=round,
  line join=round,
  person/.style={
    black,
    line width=1.15pt
  },
  oddthrow/.style={
    blue!70!black,
    line width=1.15pt,
    postaction={decorate},
    decoration={
      markings,
      mark=at position 0.72 with {\arrow{>}}
    }
  },
  eventhrow/.style={
  orange!85!black,
  line width=1.15pt,
  postaction={decorate},
  decoration={
    markings,
    mark=at position 0.50 with {\arrow{>}}
  }
},
  throwlabel/.style={
    font=\bfseries\small,
    fill=white,
    inner sep=1pt
  }
]

% Titles
\node[font=\bfseries] at (0,7.70) {Odd throws};
\node[font=\bfseries] at (7,7.70) {Even throws};

% =================================================
% Odd throws
% =================================================

% Head
\draw[person] (0,3.02) circle (0.34);

% Shoulders
\draw[person]
  (-0.58,2.45) -- (0,2.70) -- (0.58,2.45);

% Torso
\draw[person]
  (-0.58,2.45) --
  (-0.42,0.72) --
  (0.42,0.72) --
  (0.58,2.45);

% Legs
\draw[person]
  (-0.42,0.72) -- (-0.70,-0.65) -- (-1.02,-0.65);

\draw[person]
  (0.42,0.72) -- (0.70,-0.65) -- (1.02,-0.65);

% Arms
\draw[person]
  (-0.55,2.38) -- (-1.05,1.45) -- (-1.28,1.38);

\draw[person]
  (0.55,2.38) -- (1.05,1.45) -- (1.28,1.38);

% Hands
\fill[black] (-1.28,1.38) circle (1.7pt);
\fill[black] (1.28,1.38) circle (1.7pt);

% Odd throw 1
\draw[oddthrow]
  (-1.28,1.38) -- (1.28,1.38);

\node[throwlabel,text=blue!70!black]
  at (0,1.38) {1};

% Odd throw 3
\draw[oddthrow]
  (-1.28,1.38)
  .. controls (-0.85,3.08) and (0.85,3.08) ..
  (1.28,1.38);

\node[throwlabel,text=blue!70!black]
  at (0,2.70) {3};

% Odd throw 5
\draw[oddthrow]
  (-1.28,1.38)
  .. controls (-0.85,4.35) and (0.85,4.35) ..
  (1.28,1.38);

\node[throwlabel,text=blue!70!black]
  at (0,3.72) {5};

% Odd throw 7
\draw[oddthrow]
  (-1.28,1.38)
  .. controls (-0.85,5.65) and (0.85,5.65) ..
  (1.28,1.38);

\node[throwlabel,text=blue!70!black]
  at (0,4.72) {7};

% Odd throw 9
\draw[oddthrow]
  (-1.28,1.38)
  .. controls (-0.85,6.95) and (0.85,6.95) ..
  (1.28,1.38);

\node[throwlabel,text=blue!70!black]
  at (0,5.73) {9};

\node[align=center,font=\small] at (0,-1.18)
  {cross to the\\opposite hand};

%================Even =================================
\begin{scope}[xshift=7cm]

% Head
\draw[person] (0,3.02) circle (0.34);

% Shoulders
\draw[person]
  (-0.58,2.45) -- (0,2.70) -- (0.58,2.45);

% Torso
\draw[person]
  (-0.58,2.45) --
  (-0.42,0.72) --
  (0.42,0.72) --
  (0.58,2.45);

% Legs
\draw[person]
  (-0.42,0.72) -- (-0.70,-0.65) -- (-1.02,-0.65);

\draw[person]
  (0.42,0.72) -- (0.70,-0.65) -- (1.02,-0.65);

% Arms
\draw[person]
  (-0.55,2.38) -- (-1.05,1.45) -- (-1.28,1.38);

\draw[person]
  (0.55,2.38) -- (1.05,1.45) -- (1.28,1.38);

% Hands
\fill[black] (-1.28,1.38) circle (1.7pt);
\fill[black] (1.28,1.38) circle (1.7pt);

% Even throw 4
\draw[eventhrow]
  (1.28,1.38)
  .. controls (1.05,4.10) and (2.75,4.10) ..
  (2.30,1.38);

\node[throwlabel,text=orange!85!black]
  at (3.00,3.20) {4};

  % Even throw 6
\draw[eventhrow]
  (1.28,1.38)
  .. controls (0.90,5.30) and (3.00,5.30) ..
  (2.30,1.38);

\node[throwlabel,text=orange!85!black]
  at (3.00,4.15) {6};

% Even throw 8
\draw[eventhrow]
  (1.28,1.38)
  .. controls (0.75,6.50) and (3.25,6.50) ..
  (2.30,1.38);

\node[throwlabel,text=orange!85!black]
  at (3.00,5.10) {8};

% Even throw 10
\draw[eventhrow]
  (1.28,1.38)
  .. controls (0.60,7.70) and (3.50,7.70) ..
  (2.30,1.38);

\node[throwlabel,text=orange!85!black]
  at (3.00,6.05) {\texttt{a}};

% Landing point
\fill[orange!85!black] (2.30,1.38) circle (1.7pt);
\end{scope}

\node[align=center,font=\small] at (7,-1.18)
  {return to the\\same hand};

% Divider
\draw[gray!45,line width=0.6pt]
  (3.45,-1.35) -- (3.45,7.55);

\end{tikzpicture}

\caption{Visualizing throws.}
\label{fig:throw-parity}
\end{figure}

Each number encodes not only the throw height, but also implicitly determines which hand performs the next catch and at what future beat it will occur. This convention reflects what the audience perceives during an actual juggling performance: a rhythmic alternation of throws of varying heights, each corresponding to a distinct visual trajectory in space and time.

\begin{disclaimer}
Throughout this work and motivated by my juggling background, we use the words \textit{objects} and \textit{balls} as synonyms, with the idea of always visualizing juggling patterns with balls.\footnote{I apologize to jugglers of clubs, rings, hula hoops, and other props.}
\end{disclaimer}

\medskip
\begin{example}Some well-known periodic juggling sequences are:
\begin{itemize}
\item \texttt{31}, \texttt{441}, \texttt{63141}, \texttt{633} (3-balls).
\item \texttt{97531}, \texttt{91} (5-balls), \texttt{756} (6-balls), \texttt{7} (7-balls).
\end{itemize}
\end{example}
\noindent A natural question regarding this notation is: \textit{when is a sequence valid?} 

\begin{definition}[Valid siteswap]
A finite sequence of nonnegative integers represents a valid (simple) juggling pattern if the following conditions are satisfied:
\begin{itemize}
    \item[(J1)] Throws occur at regular, evenly spaced beats.
    \item[(J2)] The sequence is periodic and extends infinitely in both time directions.
    \item[(J3)] At most one ball is thrown, and at most one ball is caught, at each beat.
\end{itemize}   
\end{definition}
If a sequence satisfies (J1) and (J2), it defines a periodic rhythm. When condition (J3) also holds, the sequence is called a \emph{simple juggling pattern}. The fact that we talk about \textit{simple} juggling means that condition (J3) can actually  be violated in the mathematical framework and in real-life juggling! When we allow catching and/or throwing more than one ball in a beat, we talk about \textit{multiplex} or \textit{synchronic} siteswap.

%%%%%%%%%%%%%%%%%%%%%%%%%%%%%%%%%%%
\subsection{Algebraic formulation of the siteswap model}

In this formalism, a juggling pattern is represented by a finite sequence
\[s = (a_0, a_1, \dots, a_{p-1}) \in \mathbb{N}^p,\]
where each $ a_k $ denotes the number of beats after which the ball thrown at time $ k $ will land.  
The sequence is assumed to be periodic with period $ p $, and time is modeled cyclically using the ring $ \mathbb{Z}_p $.

\medskip
A next natural question is: how many balls are required to juggle a given pattern? The answer is given by the following result:

\begin{theorem}[Average Theorem]
Let $ s = (a_0, \dots, a_{p-1}) \in \mathbb{N}^p $ be a valid siteswap sequence. Then the number of balls being juggled equals the average value of the sequence:
\[\text{Number of balls} = \frac{1}{p} \sum_{k=0}^{p-1} a_k.\]
\end{theorem}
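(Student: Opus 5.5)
We prove the Average Theorem by a counting argument over a long time window. The first step is to fix what validity means in the formalism just introduced. We extend the sequence periodically to all integer times, $a_t := a_{t \bmod p}$ for $t\in\mathbb{Z}$. Condition (J3) then says that the landing map $\phi(t) = t + a_t$ is injective on $\mathbb{Z}$. We will also use that every landing beat is the beat of the next throw of that ball, so that $\phi$ is in fact a bijection of $\mathbb{Z}$. One point needs care here. A ball lands at time $\phi(t)$ and is rethrown at that beat. Beats with $a_t=0$ are empty hands: they receive no ball, and $\phi(t)=t$ is a fixed point, which we treat as a degenerate trivial orbit. Surjectivity onto the non-fixed beats is forced by the periodic, bi-infinite nature of the pattern (J2). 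Concretely, $\phi$ commutes with translation by $p$, and an injective map $\mathbb{Z}\to\mathbb{Z}$ commuting with $t\mapsto t+p$ induces an injective, hence bijective, map on $\mathbb{Z}_p$. This gives surjectivity.

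The second step is to identify balls with orbits. The orbits of $\phi$ on $\mathbb{Z}$, excluding fixed points, are exactly the trajectories of individual balls. Each such orbit is a strictly increasing bi-infinite sequence of beats. So the number of balls $b$ is the number of nontrivial orbits. This number is finite: at any time $T$, every ball is "in the air or in hand" on exactly one interval $[t,\phi(t))$ with $t<T\le\phi(t)$.

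The third step is the double count. For a fixed time $T$, consider the set of throws $t$ with $t < T \le t + a_t$. Each ball contributes exactly one such throw, namely its last throw before time $T$. Hence
\[
b = \#\{t\in\mathbb{Z} : t < T \le t + a_t\} \quad\text{for every } T.
\]
We sum this identity over $T = 0,1,\dots,p-1$. Each throw at time $t$ is then counted once for each $T$ in $(t, t+a_t]\cap[0,p-1]$. Grouping the throws by their residue class modulo $p$, the throw class of $k\in\mathbb{Z}_p$ consists of the times $k+mp$. The intervals $(k+mp,\,k+mp+a_k]$ for $m\in\mathbb{Z}$ intersected with $[0,p-1]$ have total length $a_k$, because the translates of a length-$a_k$ half-open interval by multiples of $p$ cover each residue class exactly $a_k$ times. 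Summing over $k$ therefore gives $pb = \sum_{k=0}^{p-1} a_k$, which is the claim.

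The main obstacle is the bookkeeping in the first two steps. We must justify that each ball has exactly one "current flight" straddling each time $T$, which requires that every ball is rethrown exactly when it lands (the bijectivity of $\phi$) and that the $0$ throws are handled consistently. Once bijectivity is established via the injective-implies-surjective argument on $\mathbb{Z}_p$, the counting in the third step is routine.
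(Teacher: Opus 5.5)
The paper gives no proof of the Average Theorem. It is quoted as background (following Polster), and the theorem is followed directly by an example. The only argument-shaped material nearby is the list of facts collected afterwards: site swaps and cyclic shifts preserve both the average and the number of balls ((S2), (C2)--(C3)), and every valid pattern flattens to a constant sequence. These facts give the classical flattening proof: flatten $s$ to some $(c,\dots,c)$, note that the average and the ball count are unchanged along the way, and observe that the cascade $(c,\dots,c)$ uses $c$ balls.

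Your argument is correct and takes a different, self-contained route. You realize balls as the nontrivial orbits of the bijection $\phi(t)=t+a_t$ of $\mathbb{Z}$. You show that each orbit straddles any time $T$ exactly once, so that $b=N(T):=\#\{t\in\mathbb{Z}: t<T\le t+a_t\}$. You then double count $\sum_{T=0}^{p-1}N(T)$.

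What your route buys is independence from two ingredients the flattening route needs but the paper does not prove. One is that a site swap preserves the number of balls, which is most naturally checked with exactly your $N(T)$. The other is that the flattening procedure terminates. Your proof also shows that $\sum_{T=0}^{p-1}N(T)=\sum_k a_k$ holds for every periodic sequence; validity is used only to make $N(T)$ constant and equal to the number of orbits. The flattening route, in exchange, yields as a by-product that all $b$-ball patterns of period $p$ are connected by site swaps and cyclic shifts.

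A few points to tidy:

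\begin{enumerate}
\item You describe the current flight as $[t,\phi(t))$ but use the condition $t<T\le\phi(t)$, that is $T\in(t,\phi(t)]$. Either convention partitions time along an orbit, but use one throughout.

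\item The phrase that the translates ``cover each residue class exactly $a_k$ times'' is not accurate as written. A given $T$ lies in only $\#\{u\in(k,k+a_k]: u\equiv T \pmod{p}\}$ of the translates. The clean justification is that $(T,m)\mapsto T-mp$ is a bijection $\{0,\dots,p-1\}\times\mathbb{Z}\to\mathbb{Z}$. Hence $\sum_{m\in\mathbb{Z}}\#\bigl((k+mp,\,k+mp+a_k]\cap\{0,\dots,p-1\}\bigr)=\#\bigl((k,k+a_k]\cap\mathbb{Z}\bigr)=a_k$.

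\item Injectivity of $\phi$ at the fixed points $a_t=0$ (no ball may land on an empty-hand beat) is not literally (J3). It is the modelling convention that a caught ball is thrown on the same beat. If you take the Permutation Test as the formal meaning of validity, bijectivity of $\phi$ on $\mathbb{Z}$ follows at once by lifting residues, and your injective-implies-surjective detour becomes unnecessary.
\end{enumerate}
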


\begin{example}\leavevmode
\begin{itemize}
    \item $ \texttt{441} $: $ \frac{4+4+1}{3} = 3 \Rightarrow $ a possible 3-ball pattern.
    \item $ \texttt{5432} $: $ \frac{5+4+3+2}{4} = 3.5 \Rightarrow $ invalid pattern (fractional number of balls).
\end{itemize}
\end{example}

\noindent This provides a powerful necessary condition for validity: if the average is not an integer, the sequence cannot represent a physical juggling pattern\footnote{From the point  of view of objects,  half a ball is still one object=ball.}.

\medskip
A \emph{site swap} is a local operation that exchanges the landing times of two throws, yielding a new valid sequence. 

\begin{definition}(Site swap $s_{i,j}$.)
Let $ s = (a_k)_{k=0}^{p-1} $ be a sequence of nonnegative integers. Given indices $ 0 \le i < j < p $ with $ j - i \le a_i $, we define the site swap $ s_{i,j} $ by:
\[a_i' = a_j + (j - i), \quad a_j' = a_i - (j - i),\]
with all other values unchanged.    
\end{definition}

This new sequence satisfies the following properties. 
\begin{itemize}
    \item[(S1)] $s$ is a valid siteswap if and  only if $s_{i,j}$ is a valid siteswap.
    \item[(S2)] The average value is preserved: the number of balls remains the same.
    \item[(S3)] Site swaps can be used to transform one pattern into another within the same ball count.
\end{itemize}

\begin{example}
\[\texttt{642} \xrightarrow{\text{swap 0 and 1}} \texttt{552}, \qquad
\texttt{642} \xrightarrow{\text{swap 0 and 2}} \texttt{444}\]
\end{example}

\medskip
\begin{definition}(Cyclic shifts.)
Given a sequence $ s = (a_0, \dots, a_{p-1}) $, its \emph{cyclic right shift} is defined as:
\[
s^\rightarrow = (a_{p-1}, a_0, \dots, a_{p-2}).
\]
\end{definition}
This operation preserves validity and ball count:
\begin{itemize}
    \item[(C1)] $ s $ is valid $ \iff $ $ s^\rightarrow $ is valid.
    \item[(C2)] The average is invariant.
    \item[(C3)] The number of balls is unchanged.
\end{itemize}

\medskip
\subsection{The flattening algorithm.}
This is a finite procedure to transform any valid siteswap into a constant sequence (i.e., all entries equal to the average ball count), using only site swaps and cyclic shifts.

\begin{enumerate}
    \item If the sequence is constant, stop.
    \item Otherwise, shift so that the maximal entry $ e $ is at position 0 and a smaller value $ f $ at position 1.
    \item If $e-f=1$, the two throws collide, since $0+e=1+f$; this case cannot occur for a valid siteswap. Otherwise, apply a site swap at positions $0$ and $1$ and repeat.
\end{enumerate}

\begin{example}
\[
\texttt{642} \xrightarrow{\text{swap}} \texttt{552} \xrightarrow{\text{shift}} \texttt{525}
\xrightarrow{\text{swap}} \texttt{345} \xrightarrow{\text{shift}} \texttt{534}
\xrightarrow{\text{swap}} \texttt{444}
\]

\[
\texttt{514} \xrightarrow{\text{swap}} \texttt{244} \xrightarrow{\text{shift}} \texttt{424}
\xrightarrow{\text{swap}} \texttt{334} \xrightarrow{\text{shift}} \texttt{433}
\Rightarrow \text{Collision!}
\]
\end{example}
Any $ b $-ball juggling sequence of period $ p $ can be transformed into the constant sequence $ (b, \dots, b) $ through a finite sequence of site swaps and cyclic shifts.
%%%%%%%%%%%%%%%%%%%%%%%%%%%%%%%%%%%%%%%%%%%%%%
\subsection{The permutation test and vertical shifts}
A fundamental physical constraint is that each throw must land at a distinct future time: otherwise, two balls would need to be caught simultaneously, which is forbidden in simple juggling. This principle translates into a key algebraic characterization.
\begin{figure}[ht]
\centering
\begin{tikzpicture}[x=1.35cm,y=1.1cm]
  \foreach \x in {0,...,3}{
    \node[draw,circle,inner sep=1.5pt] (t\x) at (\x,1) {$\x$};
    \node[draw,circle,inner sep=1.5pt] (l\x) at (\x,0) {$\x$};
  }
  \draw[->,red] (t0)--(l2);
  \draw[->,blue] (t1)--(l1);
  \draw[->,green!60!black] (t2)--(l3);
  \draw[->,purple] (t3)--(l0);
  \node[left] at (-0.35,1) {throws};
  \node[left] at (-0.35,0) {landings};
\end{tikzpicture}
\caption{The landing permutation of \texttt{6451}: $0\mapsto2$, $1\mapsto1$, $2\mapsto3$, $3\mapsto0$.}
\label{fig:permutation-test}
\end{figure}
\begin{theorem}[Permutation Test]
Let $ s = (a_0, a_1, \dots, a_{p-1}) \in \mathbb{N}^p $.  
Define the map
\[
\varphi_s : \mathbb{Z}_p \to \mathbb{Z}_p, \qquad \varphi_s(i) = (i + a_i) \bmod p.
\]
Then the sequence $ s $ defines a valid juggling pattern if and only if $ \varphi_s $ is a permutation of $ \mathbb{Z}_p $.
\end{theorem}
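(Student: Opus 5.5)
The plan is to reduce validity (J3) to the statement that no two throws land at the same beat, and then pass to residues mod $p$. I first make the model precise. Extend $s$ periodically to all of $\mathbb{Z}$ by setting $a_n = a_{n \bmod p}$; this is what (J1) and (J2) describe. The throw made at beat $n$ lands at beat $n + a_n$, so I define the landing map $L:\mathbb{Z}\to\mathbb{Z}$ by $L(n)=n+a_n$. A throw of $0$ is read as a fixed point in which nothing is thrown or caught. With this convention, at most one ball is thrown per beat automatically, and condition (J3) becomes exactly the statement that $L$ is injective: two distinct beats may not send balls to the same landing beat. Injectivity alone suffices for a consistent infinite pattern, because each ball follows the orbit $n \mapsto L(n)\mapsto L(L(n))\mapsto\cdots$.

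Next I relate injectivity of $L$ to $\varphi_s$. The key identity is
\[
L(n+p)=L(n)+p,
\]
which holds because $a_{n+p}=a_n$. Reducing mod $p$ therefore gives $L(n)\bmod p=\varphi_s(n \bmod p)$.

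For the direction ($\Leftarrow$), suppose $\varphi_s$ is a permutation and $L(m)=L(n)$. Reducing mod $p$ gives $\varphi_s(m)=\varphi_s(n)$, so $m\equiv n \pmod p$. Write $m=n+kp$. Then $L(m)=L(n)+kp$, which forces $k=0$. Hence $L$ is injective and the pattern is valid.

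For the direction ($\Rightarrow$), suppose $\varphi_s$ is not a permutation. Since $\mathbb{Z}_p$ is finite, $\varphi_s$ is then not injective, so there are $i\neq j$ in $\{0,\dots,p-1\}$ with $i+a_i\equiv j+a_j \pmod p$. Write $i+a_i=j+a_j+kp$ with $k\in\mathbb{Z}$. Then the beats $i$ and $j+kp$ are distinct, since $i-j$ is not a multiple of $p$. Their landing times are $L(i)=i+a_i$ and $L(j+kp)=j+a_j+kp$, which are equal. So two balls land on the same beat, (J3) fails, and the sequence is not valid.

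I expect the main obstacle to be the modelling step rather than the algebra. One has to justify that (J3) is equivalent to injectivity of $L$ on all of $\mathbb{Z}$, including the treatment of $0$-throws. It would also be worth remarking that a permutation condition automatically gives surjectivity of $L$, so every beat at which a ball is thrown (with $a_n>0$) has a ball arriving to be thrown. I will state this interpretation explicitly at the start, since the definition of validity in the paper is informal.
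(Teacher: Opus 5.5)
The paper gives no proof of the Permutation Test. It is quoted as standard background in the introductory crash course (following Polster), so there is no in-paper argument to compare against. Your proof is correct and is the standard one. You extend $s$ periodically and use $L(n+p)=L(n)+p$ to get $L(n)\equiv\varphi_s(n \bmod p) \pmod p$. From this you deduce that $L$ is injective on $\mathbb{Z}$ exactly when $\varphi_s$ is injective on $\mathbb{Z}_p$.

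Your decision to fix the model first is essential, not pedantic. Read literally, (J1)--(J3) are satisfied by $\mathtt{10}$: it makes one throw at each even beat and one catch at each odd beat. Yet $\varphi_{\mathtt{10}}=(1,1)$ is not a permutation, and $\mathtt{10}$ is clearly invalid (its average is $1/2$). What rules it out is your convention that a $0$ is a fixed point of $L$. Under that convention, a ball landing on a beat where nothing is thrown counts as a failure of injectivity. This matches the standard definition of a juggling pattern as a bijection $L$ of $\mathbb{Z}$ with $L(n)\ge n$; periodicity makes injectivity and bijectivity equivalent, as you observe.

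For the same reason, in $(\Rightarrow)$ you should conclude that $L$ is not injective, rather than that ``two balls land on the same beat.'' For $\mathtt{10}$ your argument produces beats $0$ and $1$, and beat $1$ carries a $0$. So what happens there is a ball landing on an empty beat, not a collision of two balls.
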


\begin{example}
\leavevmode
\begin{itemize}
    \item $ \texttt{6451} $: $ (0+6, 1+4, 2+5, 3+1) \bmod 4 = (2,1,3,0) \Rightarrow $ valid.
    \item $ \texttt{6145} $: $ (0+6, 1+1, 2+4, 3+5) \bmod 4 = (2,2,2,0) \Rightarrow $ not valid (collisions).
\end{itemize}
\end{example}

This result provides a complete algebraic criterion: validity is equivalent to injectivity of the landing time function modulo $ p $.

\medskip
Given a valid sequence $s=(a_k)_{k=0}^{p-1}$ of $b$-balls (average $b$), we may apply a \emph{vertical shift} to obtain a new pattern with more balls.

\begin{definition}[Vertical shift]
Let $ d \in \mathbb{Z} $ be such that
\[d \geq -\min\{a_0, a_1, \dots, a_{p-1}\},\]
and define the new sequence $ s' = (a_k + d)_{k=0}^{p-1} $. Then $s'$ is a $(b+d)$-ball pattern.
\end{definition}

\noindent This operation shifts all throws vertically by the same amount. Since the landing times $ i + a_i $ are increased uniformly, their distinctness modulo $ p $ is preserved.
\medskip

\begin{remark}
Vertical shifts are particularly useful in real-life juggling. They allow performers to convert low, fast patterns into higher ones, without altering the rhythm structure.   
\end{remark}

\begin{example}
\leavevmode
\begin{itemize}
    \item The 3-ball siteswap $ \texttt{531} $ becomes the 4-ball pattern $ \texttt{642} $ under a vertical shift by 1.
    \item The 3-ball siteswap $ \texttt{63141} $ becomes the 7-ball pattern $ \texttt{a7585} $ under a vertical shift by 4.
    \item The 6-ball siteswap $ \texttt{9555} $ becomes the 1-ball pattern $ \texttt{4000} $ under a vertical shift by $-5$.
\end{itemize}

\end{example}
%%%%%%%%%%%%%%

\subsection{Juggling states: hidden memory and transitions}

Freeze a juggling pattern between two consecutive beats. The throws already made belong to the past, but they have left information for the future: some balls are already scheduled to land after one beat, others after two beats, and so on. These pending landings form the hidden memory of the pattern. They determine which throws can be made next without creating a collision. A \emph{juggling state} is a record of this memory \cite[Section~2.8, pp.~44--45]{polster2003mathematics}. Fixing a maximal height $h$, we read the next $h$ beats from nearest to furthest and write a $1$ whenever a ball is scheduled to land and a $0$
otherwise. Thus, a $b$-ball state of height $h$ is a binary word of length $h$ containing exactly $b$ ones.

\begin{example}
The binary state $ \texttt{01101} $ indicates that balls will land in 2, 3, and 5 beats from now.
\end{example}

At each beat, the state is updated as follows:
\begin{itemize}
    \item Inspect the first bit. If it is $0$, no ball lands and the next throw must be $0$. If it is $1$, the landing ball is caught and can be thrown again.
    \item Remove the first bit, shift the sequence to the left, and append a $0$ at the end.
    \item If a throw of height $j$ is made, the $j$-th position of the new state must be unoccupied; replace its $0$ by a $1$.
\end{itemize}

\begin{figure}[ht]
\centering
\[
\begin{array}{c|cccc}
\text{landing in }5\text{ beats} & 0 & \color{blue}1\color{black} & 0 & 0\\
\text{landing in }4\text{ beats} & 0 & 0 & \color{blue}1\color{black} & 0\\
\text{landing in }3\text{ beats} & \color{red}1\color{black} & 0 &\color{green}1\color{black} & \color{blue}1\color{black}\\
\text{landing in }2\text{ beats} & \color{green}1\color{black} & \color{red}1\color{black} & 0 & \color{green}1\color{black}\\
\text{landing in }1\text{ beat}  & \color{blue}1\color{black} & \color{green}1\color{black} & \color{red}1\color{black} & \color{red}1\color{black}\\
\hline
\text{throw} & 3 & 5 & 3 & 1
\end{array}
\]
\caption{State evolution along the siteswap \texttt{531}.}
\label{fig:state-531}
\end{figure}

A throw of height $j$ is allowed if the corresponding landing site is unoccupied.  
Thus, the current state determines which throws are valid at any moment. The set of all juggling states for $b$ balls and maximum throw height $h$ forms the vertex set of a \emph{state graph}. Each vertex corresponds to a binary word of length $h$ with exactly $b$ ones.

\begin{definition}
Let $ \text{Vert}(b,h) $ denote the number of juggling states for $ b $ balls and height $ h $. Then
\[
\text{Vert}(b,h) = \binom{h}{b} = \frac{h!}{b!(h - b)!}.
\]
\end{definition}

Directed edges connect states that are reachable by making a valid throw, and every edge is labeled by the throw height used in the transition; see \Cref{fig:graph1,fig:graph2}.

\begin{figure}[ht]
    \centering
    \begin{tikzpicture}[
      x=1.45cm,y=1.15cm,>=stealth,
      state/.style={draw,rounded corners,font=\ttfamily\small,inner sep=3pt},
      edge/.style={->,font=\small}
    ]
      \node[state] (A) at (0,1.5) {11100};
      \node[state] (B) at (3,2.7) {11010};
      \node[state] (C) at (3,0.3) {11001};
      \node[state] (D) at (6,1.5) {10110};
      \node[state] (E) at (6,3.2) {10101};
      \node[state] (F) at (6,-0.2) {10011};

      \draw[edge] (A) edge[loop left] node {$3$} (A);
      \draw[edge] (A) edge[bend left=12] node[above] {$4$} (B);
      \draw[edge] (B) edge[bend left=12] node[below] {$2$} (A);
      \draw[edge] (A) edge[bend right=12] node[below] {$5$} (C);
      \draw[edge] (B) edge node[above] {$4$} (D);
      \draw[edge] (C) edge[bend right=10] node[right] {$2$} (B);
      \draw[edge] (C) edge node[below] {$3$} (D);
      \draw[edge] (C) edge node[below] {$5$} (F);
      \draw[edge] (D) edge[bend left=15] node[above] {$1$} (A);
    % The excited cycle 51
      \draw[edge] (B) edge[bend left=12]
      node[above] {$5$} (E);
      \draw[edge] (E) edge[bend left=12] node[above] {$1$} (B);
      
    \end{tikzpicture}
    \caption{State graph for 3 balls and height 5 (partial view).}
    \label{fig:graph1}
\end{figure}

\begin{theorem}
Every closed loop (cycle) in the state graph corresponds to a valid juggling sequence.
\end{theorem}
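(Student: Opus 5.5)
Given a closed walk $S_0 \xrightarrow{a_0} S_1 \xrightarrow{a_1} \cdots \xrightarrow{a_{p-1}} S_p = S_0$ in the state graph for $b$ balls and height $h$, I would show that the label sequence $s=(a_0,\dots,a_{p-1})$ passes the Permutation Test. By that theorem, validity is equivalent to showing that $\varphi_s(i)=(i+a_i)\bmod p$ is a bijection of $\mathbb{Z}_p$. Since $\mathbb{Z}_p$ is finite, injectivity suffices. The plan is to pass to the infinite periodic sequence $(a_k)_{k\in\mathbb{Z}}$, obtained by traversing the cycle forever in both directions. Correspondingly, I would extend the states periodically, $S_{k+p}=S_k$. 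This is legitimate because the walk is closed.

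\textbf{Step 1: interpret states as sets of absolute landing times.} To each beat $k$ I attach the set $L_k=\{k+j : \text{the } j\text{-th bit of } S_k \text{ is } 1\}\subset\mathbb{Z}$. The update rule then says the following. If $k\in L_k$ and $a_k>0$, then $L_{k+1}=(L_k\setminus\{k\})\cup\{k+a_k\}$ with $k+a_k\notin L_k$; this is the point of ``the $j$-th position must be unoccupied''. If $k\notin L_k$, then $a_k=0$ and $L_{k+1}=L_k$. (A throw $a_k=0$ lands at $k$ itself, which is consistent.)

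\textbf{Step 2: no two throws land at the same time.} Suppose $k<m$ with $a_k,a_m>0$ and $k+a_k=m+a_m=t$. After beat $k$ we have $t\in L_{k+1}$. I claim $t$ stays in $L_{k+1},\dots,L_m$: a time is only removed from $L_n$ at beat $n$, when $n=t$, and $t>m$. Hence $t\in L_m$, which contradicts the legality of throw $a_m$ at beat $m$.

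Zero throws are the main subtlety. I would treat the $0$'s as ``landing'' exactly at the empty beats $k\notin L_k$. These times cannot equal any nonzero landing time $t$, because such a $t$ belongs to $L_t$ at beat $t$, forcing the bit to be $1$. So the map $k\mapsto k+a_k$ on $\mathbb{Z}$ is injective.

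\textbf{Step 3: descend to $\mathbb{Z}_p$.} Periodicity of $(a_k)$ gives $\varphi_s(i)\equiv i+a_i$ for every representative of $i$. If $\varphi_s(i)=\varphi_s(j)$ with $i\neq j$ in $\mathbb{Z}_p$, then $i+a_i \equiv j+a_j \pmod p$. Adding a suitable multiple of $p$ to $j$ (which leaves $a_j$ unchanged) produces distinct integers $k\neq m$ with $k+a_k=m+a_m$. This contradicts Step 2. Therefore $\varphi_s$ is injective, hence a permutation, and the Permutation Test gives validity.

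The main obstacle is Step 1: making precise the dictionary between the binary-word update rule and the absolute-time sets $L_k$, in particular the persistence argument and the handling of $0$-throws. Once that bookkeeping is set up, the rest is immediate. As a consistency check, the Average Theorem then also yields $b=\frac1p\sum a_k$, because each state on the cycle carries exactly $b$ ones.
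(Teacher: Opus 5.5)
The paper gives no proof of this statement. It is recalled from Polster's book as background, and its only justification is the informal one built into the update rule: a throw is permitted only if its landing slot is free. Your argument is correct and makes that intuition rigorous. You convert states into absolute landing-time sets $L_k$. The persistence property $L_{n+1}\supseteq L_n\setminus\{n\}$ then gives injectivity of $k\mapsto k+a_k$ on $\mathbb{Z}$, and the periodic extension of the closed walk transfers this to $\mathbb{Z}_p$, where the Permutation Test applies. Step 2 alone already yields (J3), at most one catch per beat, so ending with the Permutation Test is a convenience. Its cost is the zero-throw bookkeeping, which you handle correctly.

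Two small points to tighten.
\begin{itemize}
\item Your case split in Step 1 omits $k\in L_k$ with $a_k=0$. Rule it out explicitly: every vertex has exactly $b$ ones, so a landing ball must be rethrown with positive height, since otherwise the next state would have $b-1$ ones. You use exactly this fact when you say the $0$'s occur precisely at the empty beats, and again when you deduce $a_t>0$ from $t\in L_t$.
\item For ``$k\in L_k$'' to mean that a ball lands at beat $k$, the bits in $L_k=\{k+j\}$ must be indexed from $j=0$. Under the paper's convention the first bit refers to the next beat, so either index from $0$ or use $k+j-1$.
\end{itemize}
Neither point affects the validity of your proof.
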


\begin{figure}[ht]
    \centering
    \begin{tikzpicture}[
      x=1.55cm,y=1.15cm,>=stealth,
      state/.style={draw,rounded corners,font=\ttfamily\small,inner sep=3pt},
      edge/.style={->,thick,font=\small}
    ]
      \node[state] (A) at (0,1.5) {11100};
      \node[state] (B) at (3,2.5) {11010};
      \node[state] (C) at (3,0.5) {11001};
      \node[state] (D) at (6,0.5) {10110};
      \node[state] (E) at (6,3.2) {10101};
      \draw[edge,black] (A) edge[loop left] node {$3$} (A);
      \draw[edge,blue] (A) edge[bend left=12] node[above] {$4$} (B);
      \draw[edge,blue] (B) edge[bend left=12] node[below] {$2$} (A);
      \draw[edge,red] (A) edge[bend right=8] node[below] {$5$} (C);
      \draw[edge,red] (C) edge node[below] {$3$} (D);
      \draw[edge,red] (D) edge[bend right=18] node[above] {$1$} (A);
      % The excited cycle 51
      \draw[edge,violet] (B) edge[bend left=12]
      node[above] {$5$} (E);
      \draw[edge,violet] (E) edge[bend left=12] node[above] {$1$} (B);
    \end{tikzpicture}
    \caption{Ground-state cycles \texttt{3}, \texttt{42}, \texttt{531} and excited-state cycle \texttt{51}  in the state graph.}
    \label{fig:graph2}
\end{figure}

We now recall Polster's terminology for ground and excited states and sequences
\cite[Section~2.8.2]{polster2003mathematics}.
\begin{definition}
\leavevmode
\begin{itemize}
\item The ground state is the unique state
$\underbrace{11\cdots1}_{b}
\underbrace{00\cdots0}_{h-b}$, with the $b$ ones placed at the beginning. 
\item Every other
state is called an \emph{excited state}.
\end{itemize}
\end{definition}
The ground state supports the unique
self-loop in the $b$-ball state graph. This loop is labeled by $b$ and corresponds to the constant sequence $\texttt{b}$. For example, when $b=3$
and $h=5$, the ground state is $\texttt{11100}$.

\begin{definition}
A siteswap is called a \emph{ground-state sequence} if its state path originates at the ground state.  Otherwise, it is called an \emph{excited-state sequence}.
\end{definition}

Notice that an excited-state sequence  may or may not visit the ground state. For example, $\texttt{441}$ is a ground-state sequence, whereas $\texttt{414}$ and $\texttt{144}$ are excited-state sequences whose state paths visit the ground state. Thus, after a suitable
cyclic shift, they become ground-state sequences. By contrast, the state path associated with $\texttt{51}$ never visits the ground state. Therefore, to pass from the three-ball cascade to $\texttt{51}$ and return, some transition throws are needed. For instance,
\[\cdots\, \texttt{3333}\, \texttt{4}\, \texttt{51515151}\, \texttt{2}\, \texttt{3333} \,\cdots .\]
Here, the throw $\texttt{4}$ enters the $\texttt{51}$ cycle, whereas
the throw $\texttt{2}$ returns to the ground state.\medskip

We can check easily whether a valid siteswap is based at the ground state.
\begin{proposition}[Ground-state landing times \cite{ChungGraham2008}]\label{prop:ground-state}
Let $s=(a_0,\dots,a_{p-1})$ be a valid $b$-ball siteswap and put
$T_k=k+a_k$. The pattern is based at the ground state if and only if
\[\{T_0,\dots,T_{p-1}\}=\{b,b+1,\dots,b+p-1\}.\]
\end{proposition}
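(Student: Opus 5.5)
The plan is to compute the state of the pattern at the moment just before beat $0$, directly from the landing times $T_k=k+a_k$, and then compare it with the ground state $1^b0^{h-b}$. Extend $s$ periodically to all integers, so that $a_{k+mp}=a_k$. The throw made at beat $t$ lands at beat $t+a_t$. The balls in the air just before beat $0$ are exactly those thrown at some beat $t<0$ that land at some beat $\ge 0$. Their landing beats are the elements of
\[
L=\{\,t+a_t : t<0,\ t+a_t\ge 0\,\}.
\]
The state just before beat $0$ has a $1$ in position $j+1$ exactly when $j\in L$. Valid simple juggling gives two facts here. First, $t\mapsto t+a_t$ is injective on $\mathbb{Z}$; this follows from the Permutation Test together with periodicity. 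Second, $|L|=b$. For the count, use the Average Theorem: each ball is caught periodically, so at any instant exactly $b$ balls are in flight or held. A throw of $0$ contributes no element to $L$, since $t+0<0$.

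The key step is to express $L$ in terms of one period of landing times. I will show that
\[
L=\{\,T_k-mp:\ 0\le k<p,\ m\ge 1,\ T_k-mp\ge 0\,\}.
\]
This holds because a beat $t<0$ can be written $t=k-mp$ with $0\le k<p$ and $m\ge1$, and then $t+a_t=T_k-mp$. Since $t\mapsto t+a_t$ is a bijection of $\mathbb{Z}$ (injective, and onto because it permutes residues and shifts along classes), $L$ can be described another way. It is the set of landing beats $\ge 0$ whose throw happened before $0$. Equivalently, it is $\{0,1,2,\dots\}$ minus the landing beats of throws at beats $t\ge 0$. Those landing beats are $\{T_k+mp: 0\le k<p,\ m\ge 0\}$. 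So $L=\mathbb{N}\setminus \bigcup_k (T_k+p\mathbb{N})$.

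Now both directions follow quickly. ($\Leftarrow$) If $\{T_k\}=\{b,\dots,b+p-1\}$, then $\bigcup_k(T_k+p\mathbb{N})=\{b,b+1,\dots\}$, so $L=\{0,\dots,b-1\}$. That is the ground state. ($\Rightarrow$) If the state is ground, then $L=\{0,\dots,b-1\}$, so $\bigcup_k (T_k+p\mathbb{N})=\{b,b+1,b+2,\dots\}$. The $T_k$ are distinct mod $p$, so this union is a disjoint union of $p$ arithmetic progressions with difference $p$. The $p$ smallest elements of the union, namely $b,\dots,b+p-1$, must then be exactly the starting points $T_k$. Otherwise some progression would start at a value $\ge b+p$, and its residue class would miss a number in $\{b,\dots,b+p-1\}$.

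The main obstacle is justifying that $t\mapsto t+a_t$ is a bijection $\mathbb{Z}\to\mathbb{Z}$, which is what the complement description of $L$ needs. Injectivity comes from the Permutation Test. For surjectivity, fix a residue class $r$ mod $p$. The Permutation Test says that within each residue class the landings form a translate of one progression $T_k+p\mathbb{Z}$ for the unique $k$ with $T_k\equiv r$. Hence every integer is hit exactly once.
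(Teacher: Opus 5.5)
Your argument is correct. The paper gives no proof of this proposition. It is quoted from Chung--Graham and used as a black box in Corollaries~\ref{cor:forward-ground-state} and~\ref{cor:backward-ground-state}, so there is no proof in the paper to compare against.

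The more common route is a forward simulation. Start in $1^b0^{h-b}$. No throw made at beats $0,\dots,p-1$ can land in the occupied slots $0,\dots,b-1$, and returning to the ground state at beat $p$ forbids landings after $b+p-1$. So the $p$ distinct landing times fill $\{b,\dots,b+p-1\}$. For the converse one must then add that the state attached to a periodic sequence is unique.

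You instead read the state off the bi-infinite periodic extension. You use that $\lambda(t)=t+a_t$ is a bijection of $\mathbb{Z}$ with $\lambda(t)\ge t$, which gives the identity $L=\mathbb{N}\setminus\bigcup_k(T_k+p\mathbb{N})$. Both directions then reduce to a statement about $p$ arithmetic progressions, one per residue class, and the uniqueness issue never arises.

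The only soft spot is your justification of $|L|=b$ (``each ball is caught periodically''). You can make it exact:
\[
|L|=\sum_k\lfloor T_k/p\rfloor=\tfrac1p\bigl(\textstyle\sum_k T_k-\sum_k (T_k\bmod p)\bigr)=b,
\]
because the residues $T_k\bmod p$ run over $0,\dots,p-1$ and $\sum_k T_k=\frac{p(p-1)}{2}+pb$. Alternatively you can drop it. If the state is $1^{b'}0^{h-b'}$, your argument gives $\{T_k\}=\{b',\dots,b'+p-1\}$, and comparing sums forces $b'=b$.
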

Knowing that some patterns are based at the same state allows us to chain them to create a new pattern. This property is crucial in the construction of long routines.
\begin{example}
\leavevmode
\begin{itemize}
\item All the following patterns start and end at the ground state, and thus can be concatenated freely:
\[ \texttt{3}, \quad \texttt{42}, \quad \texttt{342}, \quad \texttt{441}, \quad \texttt{55500}. \]
\item The pattern $ \texttt{55140} $ decomposes as:
\[
\underbrace{\texttt{551}}_{\text{state } 10110} \quad \texttt{40}.
\]
The sequence $ \texttt{50253} $ starts at the same state $ \texttt{10110} $.  
Hence, one can concatenate:
\[ \texttt{5515025340}.\]
\end{itemize}
\end{example}

\section{Extensions of juggling sequences: first throws and last catches}

In this note, we focus on two algebraic constructions that arise naturally in the context of extending juggling sequences:
\begin{itemize}
    \item[(i)] \textbf{Forward extensions (``first throws''):}
    the first throws of a valid sequence are elevated, and new throws
    are inserted immediately after them. This delays their landing times
    and creates additional beats before the original routine continues.

    \item[(ii)] \textbf{Backward extensions (``last catches''):}
    the throws having the latest landing times are elevated, and lower
    terminal throws are appended to record the corresponding final catches.
\end{itemize}
While these operations are intuitive from the physical viewpoint of juggling, they admit precise algebraic formulations in terms of modular arithmetic. Our main goal is to rigorously determine under which conditions such extensions preserve the validity of a juggling sequence. To the best
of the author's knowledge, the constructions and the necessary and sufficient conditions presented below have not previously been formulated and proved. The initial heuristic behind these constructions was communicated to the author by Cristobal Bascur at a juggling convention a long time ago.
\begin{remark}
At first glance, one might expect the second construction to concern
\emph{last throws}. However, directly mimicking the forward construction
at the end of a sequence does not, in general, preserve the validity of
the siteswap. This is what leads us to formulate the second operation in
terms of \emph{last catches}.
\end{remark}

\subsection*{Forward extensions}
To gain intuition into our first main result, we begin by observing several families of valid juggling sequences. These correspond to patterns with an increasing number of balls and clearly structured periodic behavior. \footnote{This is the way  originally shown to the author by Cristobal Bascur.}\medskip

\begingroup
\small
\setlength{\tabcolsep}{2.7pt}
\renewcommand{\arraystretch}{1.15}

\begin{minipage}[t]{0.11\textwidth}
\centering
\textbf{1 ball}\par\smallskip

\begin{tabular}{|c|}
\hline
20 \\ \hline
300 \\ \hline
4000 \\ \hline
50000 \\ \hline
\end{tabular}
\end{minipage}%
\hfill
\begin{minipage}[t]{0.19\textwidth}
\centering
\textbf{2 balls}\par\smallskip

\begin{tabular}{|c|c|}
\hline
31 & 330 \\ \hline
411 & 4400 \\ \hline
5111 & 55000 \\ \hline
61111 & 660000 \\ \hline
\end{tabular}
\end{minipage}%
\hfill
\begin{minipage}[t]{0.28\textwidth}
\centering
\textbf{3 balls}\par\smallskip

\begin{tabular}{|c|c|c|}
\hline
42 & 441 & 4440 \\ \hline
522 & 5511 & 55500 \\ \hline
6222 & 66111 & 666000 \\ \hline
72222 & 771111 & 7770000 \\ \hline
\end{tabular}
\end{minipage}%
\hfill
\begin{minipage}[t]{0.39\textwidth}
\centering
\textbf{4 balls}\par\smallskip

\begin{tabular}{|c|c|c|c|}
\hline
53 & 552 & 5551 & 55550 \\ \hline
633 & 6622 & 66611 & 666600 \\ \hline
7333 & 77222 & 777111 & 7777000 \\ \hline
83333 & 882222 & 8881111 & 88880000 \\ \hline
\end{tabular}
\end{minipage}
\endgroup
\medskip

For the $i$-th column, increase the first $i$ entries by $1$ and insert one entry $b-i$ after them; repeat this operation row by row.

\begin{center}
\textbf{5 balls}\par\smallskip

\renewcommand{\arraystretch}{1.15}
\begin{tabular}{|c|c|c|c|c|c|}
\hline
$q\backslash i$&$1$ & $2$ & $3$ & $4$ & $5$ \\
\hline
0&\texttt{\textcolor{blue}{5}\textcolor{black}} & \texttt{\textcolor{blue}{55}\textcolor{black}} & \texttt{\textcolor{blue}{555}\textcolor{black}} & \texttt{\textcolor{blue}{5555}\textcolor{black}} & \texttt{\textcolor{blue}{55555}\textcolor{black}} \\
\hline
1&\texttt{\textcolor{blue}{6}\textcolor{red}{4}}  & \texttt{\textcolor{blue}{66}\textcolor{red}{3}} &
\texttt{\textcolor{blue}{666}\textcolor{red}{2}}  &  \texttt{\textcolor{blue}{6666}\textcolor{red}{1}}
&  \texttt{\textcolor{blue}{66666}\textcolor{red}{0}} \\
\hline
2&\texttt{\textcolor{blue}{7}\textcolor{red}{44}}
&
\texttt{\textcolor{blue}{77}\textcolor{red}{33}}
&
\texttt{\textcolor{blue}{777}\textcolor{red}{22}}
&
\texttt{\textcolor{blue}{7777}\textcolor{red}{11}}
&
\texttt{\textcolor{blue}{77777}\textcolor{red}{00}}
\\
\hline
3&\texttt{\textcolor{blue}{8}\textcolor{red}{444}}
&
\texttt{\textcolor{blue}{88}\textcolor{red}{333}}
&
\texttt{\textcolor{blue}{888}\textcolor{red}{222}}
&
\texttt{\textcolor{blue}{8888}\textcolor{red}{111}}
&
\texttt{\textcolor{blue}{88888}\textcolor{red}{000}}
\\
\hline

4&\texttt{\textcolor{blue}{9}\textcolor{red}{4444}}
&
\texttt{\textcolor{blue}{99}\textcolor{red}{3333}}
&
\texttt{\textcolor{blue}{999}\textcolor{red}{2222}}
&
\texttt{\textcolor{blue}{9999}\textcolor{red}{1111}}
&
\texttt{\textcolor{blue}{99999}\textcolor{red}{0000}}
\\
\hline
\end{tabular}
\end{center}

We now formalize the first-throw construction and state our first main result.
\begin{definition}[First-throw extension]
\label{def:first-throw}
Let $p,b\in\mathbb N$, let $q\geq1$, and let
$s=(a_0,\dots,a_{p-1})\in\mathbb N^p$ be a valid $b$-ball siteswap of period $p$. For $0\leq i\leq\min\{b,p\}$, define
\[F_{i,q}(s)=
(a_0+q,\dots,a_{i-1}+q,
\underbrace{b-i,\dots,b-i}_{q\text{ times}},
a_i,\dots,a_{p-1}).\]
\end{definition}

\begin{theorem}
\label{thm:forward-criterion}
With the notation of Definition~\ref{def:first-throw}, assume that $s$ is a valid $b$-ball siteswap. Let $T_k=k+a_k$, $0\leq k\leq p-1$, be its landing times. For each $r\in\mathbb Z_{p+q}$, define
\[m_r=\#\bigl\{k\in\{0,\dots,p-1\}:T_k\equiv r\pmod{p+q}\bigr\}.\]

Then the following statements are equivalent:
\begin{enumerate}
\item[(i)] $F_{i,q}(s)$ is valid for some
$i\in\{0,\dots,\min\{b,p\}\}$.
\item[(ii)] $F_{i,q}(s)$ is valid for every $i\in\{0,\dots,\min\{b,p\}\}$.
\item[(iii)] The landing times $T_0,\dots,T_{p-1}$ are pairwise distinct modulo $p+q$ and
\[T_k\not\equiv b-q,b-q+1,\dots,b-1\pmod{p+q}, \qquad \text{ for every } k\in\{0,\dots,p-1\}.\]
\item[(iv)] For every $r\in\mathbb Z_{p+q}$,
\[m_r=
\begin{cases}
0,& r\in\{b-q,\dots,b-1\}\pmod{p+q},\\[1mm]
1,& r\notin\{b-q,\dots,b-1\}\pmod{p+q}.
\end{cases}\]
\end{enumerate}

Whenever these conditions hold, $F_{i,q}(s)$ is a $b$-ball siteswap of period $p+q$.
\end{theorem}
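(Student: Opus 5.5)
The plan is to compute the landing times of $F_{i,q}(s)$ explicitly and apply the Permutation Test with modulus $p+q$. Write $F_{i,q}(s)=(c_0,\dots,c_{p+q-1})$. Then
\[
c_k=a_k+q \quad (0\le k\le i-1),\qquad c_k=b-i \quad (i\le k\le i+q-1),\qquad c_{k+q}=a_k \quad (i\le k\le p-1).
\]
The landing times $L_k=k+c_k$ follow directly:
\begin{itemize}
\item for $k<i$, $L_k=k+a_k+q=T_k+q$;
\item for $i\le k\le i+q-1$, $L_k=k+b-i$, which runs through $b,b+1,\dots,b+q-1$;
\item for $k\ge i$, $L_{k+q}=k+q+a_k=T_k+q$.
\end{itemize}
So the multiset of landing times is $\{T_k+q:0\le k\le p-1\}\cup\{b,\dots,b+q-1\}$. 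This multiset is independent of $i$. That independence gives (i)$\Leftrightarrow$(ii) at once, since validity depends only on whether these $p+q$ numbers are distinct modulo $p+q$.

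Next I apply the Permutation Test: $F_{i,q}(s)$ is valid if and only if these $p+q$ values are pairwise distinct modulo $p+q$. The $q$ values $b,\dots,b+q-1$ are consecutive, and $q\le p+q$, so they are automatically distinct modulo $p+q$. Distinctness of the whole family therefore means two things. First, the $T_k+q$ are pairwise distinct modulo $p+q$, equivalently the $T_k$ are. Second, no $T_k+q$ is congruent to any of $b,\dots,b+q-1$, i.e.\ $T_k\not\equiv b-q,\dots,b-1 \pmod{p+q}$. This is exactly (iii), which proves (ii)$\Leftrightarrow$(iii).

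For (iii)$\Leftrightarrow$(iv) I count. Condition (iii) says $m_r\le1$ for all $r$ and $m_r=0$ on the $q$ residues $\{b-q,\dots,b-1\}$. There are $p$ landing times and exactly $p$ remaining residues, so pigeonhole forces $m_r=1$ on all of them. The converse is immediate.

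Finally, the ball count is the average of $F_{i,q}(s)$:
\[
\frac{(pb+iq)+q(b-i)}{p+q}=b,
\]
so $F_{i,q}(s)$ is a $b$-ball siteswap of period $p+q$ by the Average Theorem.

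I do not expect a real obstacle. The only care needed is the index bookkeeping for the shifted tail throws $k\ge i$, which move to position $k+q$, and checking the edge cases $i=0$ and $i=p$, where the landing-time computation above still holds verbatim. The hypothesis $i\le b$ is needed only so that the inserted entries $b-i$ are nonnegative. Validity of $s$ itself is not used beyond defining $b$.
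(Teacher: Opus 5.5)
Your proposal is correct and follows essentially the same route as the paper. Both compute the landing times of $F_{i,q}(s)$ as $\{T_k+q\}_{k}\cup\{b,\dots,b+q-1\}$, note that this is independent of $i$, apply the Permutation Test modulo $p+q$ to get (iii), use a residue count for (iv), and sum the entries for the ball count (your explicit remark that the $q$ new landing times are automatically distinct and your pigeonhole step are just slightly more spelled out than in the paper).
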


\begin{proof}
Fix an admissible $i$. If $k<i$, the corresponding old throw in
$F_{i,q}(s)$ lands at
\[
k+(a_k+q)=T_k+q.
\]
If $k\geq i$, the throw $a_k$ moves from time $k$ to time $k+q$,
and therefore also lands at
\[k+q+a_k=T_k+q.\]
Thus, all the old throws land at $T_0+q,\dots,T_{p-1}+q$. The $q$ new throws are made at times $i,i+1,\dots,i+q-1$, all with height $b-i$. Their landing times are $b,b+1,\dots,b+q-1$. The old landing times are pairwise distinct modulo $p+q$ if and only
if $T_0,\dots,T_{p-1}$ are pairwise distinct modulo $p+q$. Moreover,
an old landing time collides with a new one if and only if
\[T_k+q\equiv b+r\pmod{p+q}\]
for some $r\in\{0,\dots,q-1\}$, or equivalently,
\[T_k\equiv b-q+r\pmod{p+q}.\]
This proves the equivalence of (i), (ii), and (iii). Notice that the condition does not depend on $i$.

There are exactly $p$ residues outside
\[\{b-q,\dots,b-1\}\pmod{p+q}.\]
Hence, condition (iii) holds if and only if each of these residues is
occupied exactly once and each forbidden residue is not occupied.
This proves the equivalence with (iv).  Finally,
\[\begin{aligned}
\sum_{j=0}^{p+q-1}F_{i,q}(s)_j
&=\sum_{k=0}^{p-1}a_k+iq+q(b-i)\\
&=bp+iq+q(b-i)\\
&=b(p+q).
\end{aligned}\]
Therefore, whenever $F_{i,q}(s)$ is valid, it is still a $b$-ball
siteswap.
\end{proof}

\begin{corollary}[Ground-state first throws]
\label{cor:forward-ground-state}
Let $s=(a_0,\dots,a_{p-1})$ be a $b$-ball siteswap based at the ground state. Then, for every $q\geq1$ and every $0\leq i\leq\min\{b,p\}$, the sequence $F_{i,q}(s)$ is a valid $b$-ball siteswap of period
$p+q$, and it is again based at the ground state.
\end{corollary}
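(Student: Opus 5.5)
The plan is to verify condition (iii) of \Cref{thm:forward-criterion} directly, using the ground-state characterization in \Cref{prop:ground-state}, and then to check the ground-state property of the new sequence with the same proposition.

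\textbf{Step 1: checking (iii).} Since $s$ is based at the ground state, \Cref{prop:ground-state} gives
\[
\{T_0,\dots,T_{p-1}\}=\{b,b+1,\dots,b+p-1\}
\]
as integers, not merely modulo $p$. These are $p$ consecutive integers, and $p<p+q$, so they are pairwise distinct modulo $p+q$. They occupy the residues $b,\dots,b+p-1$ modulo $p+q$. The forbidden block $\{b-q,\dots,b-1\}$ is congruent modulo $p+q$ to $\{b+p,\dots,b+p+q-1\}$. Together the two blocks form $p+q$ consecutive integers, so they are disjoint modulo $p+q$. Hence condition (iii) holds, and \Cref{thm:forward-criterion} shows that $F_{i,q}(s)$ is a valid $b$-ball siteswap of period $p+q$ for every admissible $i$.

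\textbf{Step 2: ground state.} I reuse the landing-time computation from the proof of \Cref{thm:forward-criterion}. In $F_{i,q}(s)$:
\begin{itemize}
\item the old throws land at the integer times $T_k+q$, which fill $\{b+q,\dots,b+q+p-1\}$;
\item the $q$ new throws, made at times $i,\dots,i+q-1$ with height $b-i$, land at $b,\dots,b+q-1$.
\end{itemize}
The full set of landing times is therefore exactly $\{b,\dots,b+p+q-1\}$. Since $F_{i,q}(s)$ is a valid $b$-ball pattern of period $p+q$, \Cref{prop:ground-state} applied to it shows that it is based at the ground state.

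\textbf{Main obstacle.} The point needing care is that \Cref{prop:ground-state} is a statement about actual integer landing times, not residues. The proof of \Cref{thm:forward-criterion} already computes integer landing times, so both directions of the proposition apply cleanly. It is also worth noting that $b-i\ge 0$ because $i\le b$, so the inserted throws are legitimate nonnegative heights.
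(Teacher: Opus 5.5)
Your proof is correct and follows essentially the paper's argument: compute the integer landing times of $F_{i,q}(s)$ as $\{b+q,\dots,b+p+q-1\}\cup\{b,\dots,b+q-1\}=\{b,\dots,b+p+q-1\}$ and invoke \Cref{prop:ground-state}. Your Step 1 also checks validity explicitly through condition (iii) of \Cref{thm:forward-criterion}; the paper leaves this implicit even though the proposition as stated presupposes a valid $b$-ball siteswap, so your version is slightly more careful.
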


\begin{proof}
By Proposition~\ref{prop:ground-state}, $\{T_0,\dots,T_{p-1}\}=\{b,b+1,\dots,b+p-1\}$. The old throws in $F_{i,q}(s)$ land at $\{b+q,b+q+1,\dots,b+p+q-1\}$, whereas the new throws land at
$\{b,b+1,\dots,b+q-1\}$. Together, their landing times are exactly $\{b,b+1,\dots,b+p+q-1\}$. Proposition~\ref{prop:ground-state} gives the result.
\end{proof}

\begin{example}
\leavevmode
\begin{itemize}
\item Consider the excited siteswap $\texttt{51}$. Here $p=2$, $b=3$, $q=1$, and its landing times are $T_0=5$, $T_1=2$. Since $T_0\equiv T_1\equiv2\pmod 3$, the sequence $F_{i,1}(\texttt{51})$ is not valid for any admissible
$i$.
\item Now consider the cyclic shift $\texttt{15}$. Its landing times are $T_0=1$, $T_1=6$, which are congruent to $1$ and $0$ modulo $3$. They are distinct and
avoid the forbidden residue $2$. Therefore,
$F_{i,1}(\texttt{15})$ is valid for every admissible $i$. For example, $F_{0,1}(\texttt{15})=\texttt{315}$.
\end{itemize}
\end{example}
\begin{remark}
The siteswap $\texttt{15}$ is excited, so the ground-state assumption in Corollary~\ref{cor:forward-ground-state} is sufficient but not necessary.
\end{remark}
%%%%%%%%%%%%%%%%%%%%%%%%%%%%%%%%%%
\subsection*{Backward extensions}
As in the forward case, we begin with some families of valid siteswaps to give intuition for the last-catch construction. The first tables show
patterns with one to four balls, while the five-ball table illustrates how the construction acts on the pattern \texttt{66661}.

\noindent
\begingroup
\small
\setlength{\tabcolsep}{2.7pt}
\renewcommand{\arraystretch}{1.15}

\begin{minipage}[t]{0.11\textwidth}
\centering
\textbf{1 ball}\par\smallskip

\begin{tabular}{|c|}
\hline
20 \\ \hline
300 \\ \hline
4000 \\ \hline
50000 \\ \hline
\end{tabular}
\end{minipage}%
\hfill
\begin{minipage}[t]{0.19\textwidth}
\centering
\textbf{2 balls}\par\smallskip

\begin{tabular}{|c|c|}
\hline
31 & 330 \\ \hline
411 & 4400 \\ \hline
5111 & 55000 \\ \hline
61111 & 660000 \\ \hline
\end{tabular}
\end{minipage}%
\hfill
\begin{minipage}[t]{0.28\textwidth}
\centering
\textbf{3 balls}\par\smallskip

\begin{tabular}{|c|c|c|}
\hline
42 & 441 & 4440 \\ \hline
522 & 5511 & 55500 \\ \hline
6222 & 66111 & 666000 \\ \hline
72222 & 771111 & 7770000 \\ \hline
\end{tabular}
\end{minipage}%
\hfill
\begin{minipage}[t]{0.39\textwidth}
\centering
\textbf{4 balls}\par\smallskip

\begin{tabular}{|c|c|c|c|}
\hline
53 & 552 & 5551 & 55550 \\ \hline
633 & 6622 & 66611 & 666600 \\ \hline
7333 & 77222 & 777111 & 7777000 \\ \hline
83333 & 882222 & 8881111 & 88880000 \\ \hline
\end{tabular}
\end{minipage}

\endgroup

\bigskip

\begin{center}
\textbf{5 balls: \texttt{66661}}\par\smallskip

\begingroup
\small
\setlength{\tabcolsep}{3.2pt}
\renewcommand{\arraystretch}{1.18}

\begin{tabular}{|c|c|c|c|c|c|}
\hline
$q\backslash i$ & $1$ & $2$ & $3$ & $4$ & $5$ \\
\hline

$0$
& \texttt{666\textcolor{blue}{6}1}
& \texttt{66\textcolor{blue}{66}1}
& \texttt{6\textcolor{blue}{666}1}
& \texttt{\textcolor{blue}{6666}1}
& \texttt{\textcolor{blue}{66661}}
\\
\hline

$1$
& \texttt{666\textcolor{blue}{7}1\textcolor{red}{4}}
& \texttt{66\textcolor{blue}{77}1\textcolor{red}{3}}
& \texttt{6\textcolor{blue}{777}1\textcolor{red}{2}}
& \texttt{\textcolor{blue}{7777}1\textcolor{red}{1}}
& \texttt{\textcolor{blue}{77772}\textcolor{red}{0}}
\\
\hline

$2$
& \texttt{666\textcolor{blue}{8}1\textcolor{red}{44}}
& \texttt{66\textcolor{blue}{88}1\textcolor{red}{33}}
& \texttt{6\textcolor{blue}{888}1\textcolor{red}{22}}
& \texttt{\textcolor{blue}{8888}1\textcolor{red}{11}}
& \texttt{\textcolor{blue}{88883}\textcolor{red}{00}}
\\
\hline

$3$
& \texttt{666\textcolor{blue}{9}1\textcolor{red}{444}}
& \texttt{66\textcolor{blue}{99}1\textcolor{red}{333}}
& \texttt{6\textcolor{blue}{999}1\textcolor{red}{222}}
& \texttt{\textcolor{blue}{9999}1\textcolor{red}{111}}
& \texttt{\textcolor{blue}{99994}\textcolor{red}{000}}
\\
\hline
\end{tabular}

\endgroup
\end{center}

\smallskip

\noindent
The columns correspond to the number $i$ of selected throws, and the
rows correspond to the value of $q$. The selected throws are shown in
\textcolor{blue}{blue}, the appended throws of height $5-i$ in
\textcolor{red}{red}, and the unchanged throws in black.
\begin{definition}[Last-catch extension]
\label{def:last-catch}
Let $p,b\in\mathbb N$, let $q\geq1$, and let
$s=(a_0,\dots,a_{p-1})$ be a valid $b$-ball siteswap. Put $T_k=k+a_k$, $0\leq k\leq p-1$, and fix $1\leq i\leq\min\{b,p\}$. Let $K_i(s)\subset\{0,\dots,p-1\}$ be the set of indices of the
$i$ largest landing times $T_k$. The \emph{last-catch extension} of $s$ is the sequence
\[B_{i,q}(s)=(\widehat a_0,\dots,\widehat a_{p+q-1}),\]
defined by
\[\widehat a_k=
\begin{cases}
a_k+q, & k\in K_i(s),\\[1mm]
a_k, & k\in\{0,\dots,p-1\}\setminus K_i(s),\\[1mm]
b-i, & k\in\{p,\dots,p+q-1\}.
\end{cases}\]
\end{definition}
Since $s$ is valid, the landing times $T_k$ are pairwise distinct; hence, $K_i(s)$ is well-defined.
\begin{example}
For the pattern $\texttt{67561}$, the two largest landing times
correspond to the throws $\texttt{7}$ and $\texttt{6}$. Hence,
\[\mathtt{6}
\underbrace{\textcolor{blue}{\mathtt{7}}}_{\text{penultimate landing}}
\mathtt{5}
\underbrace{\textcolor{blue}{\mathtt{6}}}_{\text{last landing}}
\mathtt{1}
\;\xrightarrow{i=2,\;q=3}\;
\mathtt{6}
\textcolor{blue}{\mathtt{a}}
\mathtt{5}
\textcolor{blue}{\mathtt{9}}
\mathtt{1}
\textcolor{red}{\mathtt{333}}.\]
\end{example}

\begin{theorem}[Last-catch criterion]
\label{thm:backward-criterion}
With the notation of Definition~\ref{def:last-catch}, let  $K=K_i(s)$ and define the modified landing times of the old throws by
\[\widehat T_k=
\begin{cases}
T_k+q, & k\in K,\\[1mm]
T_k, & k\notin K.
\end{cases}\]
For each $r\in\mathbb Z_{p+q}$, set
\[\widehat m_r=\#\bigl\{k\in\{0,\dots,p-1\}:\widehat T_k\equiv r\pmod{p+q}\bigr\}.\]
Then the following statements are equivalent:
\begin{enumerate}
\item[(i)] The last-catch extension $B_{i,q}(s)$ is valid.
\item[(ii)] The modified landing times $\widehat T_0,\dots,\widehat T_{p-1}$ are pairwise distinct modulo $p+q$ and none of them belongs to
\[\{p+b-i,\dots,p+b-i+q-1\} \pmod{p+q}.\]
\item[(iii)] For every $r\in\mathbb Z_{p+q}$,
\[\widehat m_r=
\begin{cases}
0,& r\in \{p+b-i,\dots,p+b-i+q-1\}\pmod{p+q},\\[1mm]
1,& r\notin \{p+b-i,\dots,p+b-i+q-1\}\pmod{p+q}.
\end{cases}\]
\end{enumerate}
Whenever these conditions hold, $B_{i,q}(s)$ is a $b$-ball siteswap of period $p+q$.
\end{theorem}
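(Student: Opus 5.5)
The plan mirrors the proof of Theorem~\ref{thm:forward-criterion}: compute the landing time of every throw of $B_{i,q}(s)$ and then apply the Permutation Test with modulus $p+q$.

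\textbf{Old throws.} Positions $k\in\{0,\dots,p-1\}$ are not moved by the construction; only the heights indexed by $K$ are raised by $q$. So the throw at time $k$ lands at $k+a_k+q=T_k+q$ if $k\in K$, and at $T_k$ otherwise. In both cases this is exactly $\widehat T_k$.

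\textbf{New throws.} There are $q$ new throws, at times $p,p+1,\dots,p+q-1$, each of height $b-i$. They land at $p+b-i,\dots,p+b-i+q-1$. These $q$ consecutive integers are pairwise distinct modulo $p+q$ because $q<p+q$. So the new throws never collide with each other.

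\textbf{(i)$\iff$(ii).} By the Permutation Test, $B_{i,q}(s)$ is valid if and only if all $p+q$ landing times are pairwise distinct modulo $p+q$. Since the new landing times are already distinct among themselves, this reduces to two requirements. First, the $\widehat T_k$ must be pairwise distinct modulo $p+q$. Second, no $\widehat T_k$ may lie in the forbidden window $\{p+b-i,\dots,p+b-i+q-1\}$ modulo $p+q$. This is precisely (ii).

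\textbf{(ii)$\iff$(iii).} The window consists of exactly $q$ residues, so there are exactly $p$ residues outside it. We have $p$ values $\widehat T_k$. Condition (ii) says they are injective into the complement of the window. A map from a set of size $p$ into a set of size $p$ is injective exactly when it is bijective. So (ii) holds if and only if each allowed residue is hit exactly once and each forbidden residue is hit zero times, which is (iii). This is the same counting used for the forward criterion.

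\textbf{Ball count.} By the Average Theorem it suffices to compute the sum of the entries:
\[\sum_{j=0}^{p+q-1}\widehat a_j=\sum_{k=0}^{p-1}a_k+iq+q(b-i)=bp+bq=b(p+q).\]
The middle term $iq$ appears because $|K|=i$, which relies on $K_i(s)$ being well-defined. That in turn follows from validity of $s$, since the $T_k$ are then distinct integers. Hence the average is $b$.

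\textbf{Main obstacle.} The only subtle point is the bookkeeping that the old throws keep their positions, unlike in the forward case where they shift by $q$. As a result the modified landing times really are $\widehat T_k$, with no extra uniform shift. No property of $K$ being the set of largest landing times is needed for this equivalence itself; that choice only matters for when the conditions actually hold.
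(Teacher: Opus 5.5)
Your proof is correct and follows the paper's argument essentially step for step. You identify the old landing times as $\widehat T_k$ and the new ones as the window $\{p+b-i,\dots,p+b-i+q-1\}$, apply the Permutation Test for (i)$\iff$(ii), use the $p$-residues-for-$p$-values counting for (ii)$\iff$(iii), and sum the entries (using $|K|=i$) for the ball count.
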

\begin{proof}
The landing times of the old throws in $B_{i,q}(s)$ are easily identified. If $k\in K$, the height $a_k$ is increased by $q$, and
the corresponding landing time becomes
\[k+(a_k+q)=T_k+q=\widehat T_k.\]
If $k\notin K$, the throw is unchanged and its landing time remains
\[k+a_k=T_k=\widehat T_k.\]
The $q$ new throws are made at times $p,p+1,\dots,p+q-1$ and all have height $b-i$. Hence, their landing times are
\[p+b-i,\ p+b-i+1,\dots,p+b-i+q-1.\]
Therefore, the complete set of landing times of $B_{i,q}(s)$ is
\[\{\widehat T_0,\dots,\widehat T_{p-1}\}\cup\{p+b-i,\dots,p+b-i+q-1\}.\]
The $q$ landing times in the second set are pairwise distinct modulo
$p+q$. By the permutation test, this union is a complete system of
residues modulo $p+q$ if and only if the modified landing times
$\widehat T_0,\dots,\widehat T_{p-1}$ are pairwise distinct and avoid
the second set. This proves the equivalence between \textup{(i)} and
\textup{(ii)}.

There are exactly $p$ residues outside
\[\{p+b-i,\dots,p+b-i+q-1\}\pmod{p+q}.\]
Thus, condition \textup{(ii)} holds if and only if every residue
outside this set is occupied exactly once by the modified landing
times and every residue inside it is not occupied. This is precisely
condition \textup{(iii)}.

Finally,
\[\begin{aligned}
\sum_{k=0}^{p+q-1}\widehat a_k
&=\sum_{k=0}^{p-1}a_k+iq+q(b-i)\\
&=bp+iq+q(b-i)\\
&=b(p+q).
\end{aligned}\]
Hence, whenever $B_{i,q}(s)$ is valid, it is a $b$-ball siteswap.
\end{proof}
\begin{corollary}[Ground-state last catches]
\label{cor:backward-ground-state}
Let $s=(a_0,\dots,a_{p-1})$ be a $b$-ball siteswap based at the ground state. Then, for every
$q\geq1$ and every $1\leq i\leq\min\{b,p\}$, the last-catch extension $B_{i,q}(s)$ is a valid $b$-ball siteswap
of period $p+q$, and it is again based at the ground state.
\end{corollary}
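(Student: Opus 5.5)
By Proposition~\ref{prop:ground-state}, the ground-state hypothesis gives $\{T_0,\dots,T_{p-1}\}=\{b,b+1,\dots,b+p-1\}$. Since these landing times are pairwise distinct integers, the $i$ largest of them are exactly $b+p-i,\dots,b+p-1$. So $K=K_i(s)$ is the set of indices $k$ with $T_k\ge b+p-i$. This is well defined because $1\le i\le p$.

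Next I compute the modified landing times from Theorem~\ref{thm:backward-criterion}. For $k\notin K$ the values $\widehat T_k=T_k$ fill $\{b,\dots,b+p-i-1\}$. For $k\in K$ the values $\widehat T_k=T_k+q$ fill $\{b+p-i+q,\dots,b+p+q-1\}$. The $q$ appended throws of height $b-i$, made at times $p,\dots,p+q-1$, land at $\{p+b-i,\dots,p+b-i+q-1\}$. This set sits exactly in the gap between the other two blocks.

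Taken together, the landing times of $B_{i,q}(s)$ are exactly the integers
\[
\{b,b+1,\dots,b+p+q-1\},
\]
which are $p+q$ consecutive integers. Hence they form a complete residue system modulo $p+q$. This verifies condition (ii) of Theorem~\ref{thm:backward-criterion}: the $\widehat T_k$ are distinct modulo $p+q$ and avoid the forbidden window. Alternatively, the permutation test applies directly. So $B_{i,q}(s)$ is valid, and by the final assertion of that theorem it is a $b$-ball siteswap of period $p+q$. Since the height $b-i\ge0$ and all entries are nonnegative, the sequence is admissible.

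Ground-stateness then follows from the converse direction of Proposition~\ref{prop:ground-state}, applied to $B_{i,q}(s)$ with $b$ balls and period $p+q$: its landing-time set is exactly $\{b,\dots,b+(p+q)-1\}$.

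The only point needing care is identifying $K$ as the top block $\{b+p-i,\dots,b+p-1\}$ of landing times. This is immediate once the landing set is known to be an interval. The rest is bookkeeping that the three blocks tile an interval without overlap.
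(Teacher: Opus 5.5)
Your proof is correct and follows the paper's argument. You identify the landing set as an interval, show that the three blocks (unselected throws, shifted selected throws, appended throws) tile $\{b,\dots,b+p+q-1\}$, and conclude with Proposition~\ref{prop:ground-state}. Your ordering is in fact slightly more careful than the paper's: you first obtain validity from the complete residue system via the permutation test or Theorem~\ref{thm:backward-criterion}, and only then apply the proposition, which presupposes validity, whereas the paper invokes the proposition directly for validity as well.
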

\begin{proof}
By Proposition~\ref{prop:ground-state}, the landing times of $s$ are $\{T_0,\dots,T_{p-1}\}=
\{b,b+1,\dots,b+p-1\}$. Therefore, the $i$ largest landing times are $\{b+p-i,\dots,b+p-1\}$. The landing times that are not among the $i$ largest remain unchanged
and form the interval
\[\{b,b+1,\dots,b+p-i-1\},\]
with the first interval omitted when $i=p$. The $i$ selected landing times are increased by $q$ and become
\[\{b+p-i+q,\dots,b+p+q-1\}.\]
The $q$ appended throws have height $b-i$ and are made at times $p,\dots,p+q-1$. Hence, their landing times are
\[\{b+p-i,\dots,b+p-i+q-1\}.\]
These three intervals are disjoint, and their union is $\{b,b+1,\dots,b+p+q-1\}$. Proposition~\ref{prop:ground-state} now shows that $B_{i,q}(s)$ is valid and is again based at the ground state. Its number of balls is
$b$ by \Cref{thm:backward-criterion}.
\end{proof}

\begin{example}
\leavevmode
\begin{itemize}
\item Consider the excited siteswap $\texttt{15}$ with $i=q=1$. Its landing times are $T_0=1$, $T_1=6$. The largest landing time is $T_1$, so the corresponding throw is increased by $1$, and a new throw of height $b-i=2$ is appended. Thus,  $B_{1,1}(\texttt{15})=\texttt{162}$, which is not valid.
\item Now consider the cyclic shift $\texttt{51}$. Its landing times are $T_0=5$,  $T_1=2$. The largest landing time is now $T_0$, and
$B_{1,1}(\texttt{51})=\texttt{612}$, which is valid.
\end{itemize}
\end{example}
\begin{remark}
The ground-state assumption in Corollary~\ref{cor:backward-ground-state} is sufficient but not necessary.   
\end{remark}

We also recall that an interactive visualizer for these two extensions is available in~\cite{parada2026siteswapvisualizer}. It is based on the online version of \textit{Juggling Lab}~\cite{boyce2026jugglinglab}. We recommend visiting the site to learn more and to play with siteswaps.
%%%%%%%

\subsection*{More examples}
We now illustrate the two constructions on the same pattern. Let
\[s=(6,4,5,1)\]
with period $p=4$. Its average is  $4$, so $s$ is a $b=4$ ball pattern. Its landing map is $\varphi_s(k)=(k+a_k)\bmod 4$ and
\[\varphi_s=(2,1,3,0).\]
Therefore, $s$ is valid. Moreover, its landing times are $T=(6,5,7,4)$  and hence
\[\{T_0,T_1,T_2,T_3\}=\{4,5,6,7\}.\]
Thus, $s$ is based at the ground state.

\smallskip
\emph{Forward extension.}
Take $i=q=2$. According to
Definition~\ref{def:first-throw}, we increase the first two throws by
$2$ and insert two new throws of height $b-i=2$ after them. Therefore,
\[\begin{aligned}
F_{2,2}(s)
&=
\bigl(
\textcolor{blue}{6+2},
\textcolor{blue}{4+2},
\textcolor{red}{2},
\textcolor{red}{2},
5,1
\bigr)\\
&=
\bigl(
\textcolor{blue}{8},
\textcolor{blue}{6},
\textcolor{red}{2},
\textcolor{red}{2},
5,1
\bigr).
\end{aligned}\]
The old landing times become  $T+2=(8,7,9,6)$, while the new throws land at times $4$ and $5$. Together, the landing
times are
\[\{4,5,6,7,8,9\}.\]
Equivalently, the new landing map is
\[\varphi_{F_{2,2}(s)}=(2,1,4,5,3,0),\]
which is a permutation of $\mathbb Z_6$. Hence, $F_{2,2}(s)$ is valid and is again based at the ground state.

\smallskip
\emph{Backward extension.}
Again take $i=q=2$. The two largest landing times are $T_2=7$ and $T_0=6$, so $K_2(s)=\{0,2\}$. According to Definition~\ref{def:last-catch}, the corresponding throws are increased by $2$, and two new throws of height $b-i=2$ are appended. Thus,
\[\begin{aligned}
B_{2,2}(s)
&=
\bigl(
\textcolor{blue}{6+2},
4,
\textcolor{blue}{5+2},
1,
\textcolor{red}{2},
\textcolor{red}{2}
\bigr)\\
&=
\bigl(
\textcolor{blue}{8},
4,
\textcolor{blue}{7},
1,
\textcolor{red}{2},
\textcolor{red}{2}
\bigr).
\end{aligned}\]
The modified landing times of the old throws are $\widehat T=(8,5,9,4)$ and the appended throws land at times $6$ and $7$. Therefore, the
complete set of landing times is again
\[\{4,5,6,7,8,9\}.\]
Equivalently,
\[\varphi_{B_{2,2}(s)}=(2,5,3,4,0,1),\]
which is a permutation of $\mathbb Z_6$. Hence, $B_{2,2}(s)$ is also valid and based at the ground state.\medskip

We finish the analysis of this work by constructing a sequence/routine that combines forward extensions, state-loop splicing, local landing-time swaps, and a final backward extension. \medskip

\noindent The constant five-ball cascade can be represented by the four-beat
sequence $\mathtt{5555}$. Its first forward extension is
\[F_{4,1}(\mathtt{5555})=\mathtt{66661}.\]
From $\mathtt{66661}$, we apply a last-catch extension with $i=2$ and $q=1$. The two largest landing times correspond to the third and fourth throws, and therefore
\[B_{2,1}(\mathtt{66661})=\mathtt{667713}=\mathtt{66\,771\,3}.\]
The appearance of the block $\mathtt{771}$ is not accidental. In the
five-ball state graph of height $7$, let $G=\mathtt{1111100}$ be the ground state. After the initial throws $\mathtt{66}$, the
juggler reaches the state
\[G\xrightarrow{\mathtt{66}}X, \qquad X=\mathtt{1110110}.\]
At this state, $\mathtt{771}$ is a loop, while both $\mathtt{3}$ and
$\mathtt{661}$ return to the ground state:
\[X\xrightarrow{\mathtt{771}}X,
\qquad X\xrightarrow{\mathtt{3}}G, \qquad X\xrightarrow{\mathtt{661}}G.\]
Indeed, $\mathtt{663}$ and $\mathtt{66661}$ are both ground-state
sequences and share the same initial root $\mathtt{66}$. Hence, after performing the loop $\mathtt{771}$, we may replace the terminal
$\mathtt{3}$ by $\mathtt{661}$:
\[\mathtt{66\,771\,3}\longmapsto \mathtt{66\,771\,661}.\]
We now apply a second forward extension with $i=2$ and $q=3$:
\[F_{2,3}(\mathtt{66\,771\,661})=\mathtt{99333\,771\,661}.\]
Next, for $n\geq1$ we use the local swap rule
\[nm\longmapsto(m+1)(n-1).\]
More precisely, $\mathtt{99}\longmapsto\mathtt{a8}$,
while $\mathtt{333}\longmapsto\mathtt{531}$ as both are ground-state sequences of $3-$balls\footnote{You can justify using the state graph or doing site swap operations.}. Consequently,
\[\mathtt{99333\,771\,661} \longmapsto \mathtt{a8531\,771\,661}.\]
After completing this sequence, the juggler returns to the ground
state. We may therefore append three throws from the basic cascade and
consider
\[\bar s=\mathtt{a8531\,771\,661\,555}.\]
Its landing times are
\[(10,9,7,6,5,12,13,8,14,15,11,16,17,18)\]
and hence form the set $\{5,6,\dots,18\}$. The five largest landing times correspond to the throws
\[\mathtt{6},\quad\mathtt{6},\quad
\mathtt{5},\quad\mathtt{5},\quad\mathtt{5}.\]
Applying the last-catch extension with $i=5$ and $q=2$, these throws
become
\[\mathtt{6},\mathtt{6},\mathtt{5},\mathtt{5},\mathtt{5}
\longmapsto
\mathtt{8},\mathtt{8},\mathtt{7},\mathtt{7},\mathtt{7},\]
and two terminal throws of height $b-i=0$ are appended. Therefore,
\[B_{5,2}(\bar s)=\mathtt{a8531\,771\,88177700}.\]
The complete construction is thus
\[\begin{aligned}
\mathtt{5555}
&\xrightarrow{F_{4,1}}
\mathtt{66661}
\xrightarrow{B_{2,1}}
\mathtt{66\,771\,3}\\
&\longmapsto
\mathtt{66\,771\,661}
\xrightarrow{F_{2,3}}
\mathtt{99333\,771\,661}\\
&\longmapsto
\mathtt{a8531\,771\,661}\\
&\longmapsto
\mathtt{a8531\,771\,661\,555}
\xrightarrow{B_{5,2}}
\mathtt{a8531\,771\,88177700}.
\end{aligned}\]

A video demonstration of this routine is available in \cite{truco5}.

\section{Conclusion}

We introduced two operations for extending valid siteswaps: first throws and last catches. In both cases, we gave necessary and sufficient conditions in terms of the landing times modulo the new period. For patterns based at the ground state, these conditions always hold, and both constructions produce new $b$-ball siteswaps based again at the
ground state. For arbitrary valid siteswaps, the situation is more delicate. Finally, the analysis is  completed  with a more realistic example. A natural next question is whether similar constructions can be developed for synchronous and multiplex siteswaps.

\section*{Acknowledgments}
I wish to thank Cristobal Bascur for introducing me to the worlds of mathematics and juggling. I also thank my juggling friends Fernando Torres, Matias Alarc\'on, Claudio Tapia, and Matias N\'u\~nez for helping me recover my love of juggling.

\nocite{polster2003mathematics,beek1995science,walker1982short,shannon1981scientific}
\bibliographystyle{alpha}
\bibliography{siteswap}

\begin{thebibliography}{BEGW94}

\bibitem[BEGW94]{buhler1994juggling}
Joe Buhler, David Eisenbud, Ron Graham, and Colin Wright.
\newblock Juggling drops and descents.
\newblock {\em The American Mathematical Monthly}, 101(6):507--519, 1994.

\bibitem[BJ26]{boyce2026jugglinglab}
Jack Boyce and {Juggling Lab contributors}.
\newblock Juggling lab.
\newblock Computer software, version 1.7.5, 2026.
\newblock Available at \url{https://jugglinglab.org/}. Accessed July 28, 2026.

\bibitem[BL95]{beek1995science}
Peter~J. Beek and Arthur Lewbel.
\newblock The science of juggling.
\newblock {\em Scientific American}, 273(5):92--97, 1995.

\bibitem[CG08]{ChungGraham2008}
Fan Chung and Ron Graham.
\newblock Primitive juggling sequences.
\newblock {\em The American Mathematical Monthly}, 115(3):185--194, 2008.

\bibitem[MT89]{magnusson1989physics}
Bengt Magnusson and Bruce Tiemann.
\newblock The physics of juggling.
\newblock {\em The Physics Teacher}, 27(8):584--589, 1989.

\bibitem[Para]{parada2026siteswapvisualizer}
Hugo Parada.
\newblock Siteswap extension visualizer.
\newblock
  \url{https://sites.google.com/view/hugo-parada/juggling-outreach/siteswap-extensions}.

\bibitem[Parb]{truco5}
Hugo Parada.
\newblock Siteswap \texttt{a8531 771 88177700}.
\newblock
  \url{https://drive.google.com/file/d/1agN4oUrzV3i3ucQqtWmatr7iOcLwWUE1/view?usp=drive_link}.

\bibitem[Pol03]{polster2003mathematics}
Burkard Polster.
\newblock {\em The Mathematics of Juggling}.
\newblock Springer, New York, 2003.

\bibitem[Sha93]{shannon1981scientific}
Claude~E. Shannon.
\newblock Scientific aspects of juggling.
\newblock In N.~J.~A. Sloane and Aaron~D. Wyner, editors, {\em Claude Elwood
  Shannon: Collected Papers}, pages 850--864. IEEE Press, New York, 1993.
\newblock Manuscript written circa 1980.

\bibitem[TM91]{tiemann1991notation}
Bruce Tiemann and Bengt Magnusson.
\newblock A notation for juggling tricks: A lot of juggling tricks.
\newblock {\em Juggler's World}, 43(2):31--33, 1991.
\newblock Available at
  \url{https://www.dev.juggle.org/history/archives/jugmags/43-2/43-2,p31.htm}.

\bibitem[Wal82]{walker1982short}
Jeff Walker.
\newblock Variations for numbers jugglers.
\newblock {\em Juggler's World}, 34(1):11, 1982.
\newblock Available at
  \url{https://www.jonglage.net/theorie/notation/ladder/refs/Jeff Walker -
  Variations for numbers jugglers - JW-vol34no1-p11.pdf}.

\bibitem[WD95]{wrightday1995origins}
Colin~D. Wright and Mike Day.
\newblock Responses in ``all site swap original inventors sought''.
\newblock Posts to the \texttt{rec.juggling} newsgroup, 1995.
\newblock Messages posted on 23--24 May 1995. Archived by Google Groups at
  \url{https://groups.google.com/g/rec.juggling/c/lB-jy9wcgNs}.

\end{thebibliography}

\end{document}